\documentclass{article}
\usepackage{arxiv}
\usepackage{url}

\usepackage[utf8]{inputenc} 
\usepackage[T1]{fontenc}    
\usepackage{hyperref}       
\usepackage{url}            
\usepackage{booktabs}       
\usepackage{amsfonts}       
\usepackage{amsmath}
\usepackage{amssymb}
\usepackage{amsthm}
\usepackage{color}
\usepackage[numbers]{natbib}
\usepackage{graphicx}
\usepackage[FIGTOPCAP]{subfigure}
\usepackage{braket}

\newtheorem{lemma}{Lemma}

\newcommand{\comment}[1]{}

\newcommand{\idty}[1]{\mathbb{1}}
\newcommand{\ovsqrt}[1]{\frac{1}{\sqrt{2}}}

\newcommand{\tr}[1]{\mathrm{Tr}\left( #1 \right)}

\usepackage{nicefrac}       
\usepackage{microtype}      
\usepackage{lipsum}
\newcommand{\Herm}{M}

\usepackage{soul}

\title{An initialization strategy for addressing barren plateaus in parametrized quantum circuits}

\author{
    Edward Grant \thanks{Corresponding author}\\
    Rahko Limited \&\\
    Department of Computer Science\\ 
    University College London\\
    \texttt{edward.grant@rahko.ai}
    \And
    Leonard Wossnig\\
    Rahko Limited \&\\
    Department of Computer Science\\
    University College London\\
    \texttt{leonard.wossnig@rahko.ai}
    \AND
    Mateusz Ostaszewski\\
    Institute of Theoretical and Applied Informatics\\
    Polish Academy of Sciences\\
    \texttt{mostaszewski@iitis.pl}
    \And
    Marcello Benedetti\\
    Cambridge Quantum Computing Limited \&\\
    Department of Computer Science\\
    University College London\\
    \texttt{m.benedetti@cs.ucl.ac.uk}
}

\begin{document}
\maketitle

\begin{abstract}

Parametrized quantum circuits initialized with random initial parameter values are characterized by barren plateaus where the gradient becomes exponentially small in the number of qubits. In this technical note we  theoretically motivate and empirically validate an initialization strategy which can resolve the barren plateau problem for practical applications.  The technique involves randomly selecting some of the initial parameter values, then choosing the remaining values so that the circuit is a sequence of shallow blocks that each evaluates to the identity. This initialization limits the effective depth of the circuits used to calculate the first parameter update so that they cannot be stuck in a barren plateau at the start of training. In turn, this makes some of the most compact ans\"atze usable in practice, which was not possible before even for rather basic problems. We show empirically that variational quantum eigensolvers and quantum neural networks initialized using this strategy can be trained using a gradient based method.
\end{abstract}

\keywords{Quantum Neural Networks \and Variational Quantum Eigensolvers \and Quantum Machine Learning}

\section{Introduction}
Parametrized quantum circuits have recently been shown to suffer from gradients that vanish exponentially towards zero as a function of the number of qubits. This is known as the `barren plateau' problem and has been demonstrated analytically and numerically~\cite{mcclean2018barren}. The implication of this result is that for a wide class of circuits, random initialization will cause gradient-based optimization methods to fail. Resolving this issue is critical to the scalability of algorithms such as the variational quantum eigensolver (VQE)~\cite{peruzzo2014variational,kandala2017hardware} and quantum neural networks (QNNs)~\cite{schuld2018circuit,grant2018hierarchical,chen2018universal}.

In Ref.~\cite{verdon2019} the author shows that the barren plateau problem is not an issue of the specifically chosen parametrization, but rather extends the result to any direction in the tangent space of a point in the unitary group. In the Appendix we re-derive this result, and show that the gradient of the scalar $\bra{0}U(\alpha)^{\dagger}HU(\alpha)\ket{0}$ for any Hermitian operator $H$, vanishes with respect to any direction on the unitary group. Similarly the variance of the gradient decreases exponentially with the number of qubits.

Notably, this does not preclude the existence of a parametrization that would allow for efficient gradient-based optimization. Finding such a parametrization seems a non-trivial, but important, task. Indeed, as argued in Ref.~\cite{mcclean2018barren}, the barren plateau problem affects traditional ans\"atze such as the unitary coupled cluster, when initialized randomly, even for a small number of orbitals. The authors leave it as an open question whether the problem can be solved by employing alternative hardware-efficient ans\"atze.

In this technical note, we take an alternative route. Instead of proposing a new ansatz, we present a solution based on a specific way of initializing the parameters of the circuit. The strategy resolves the problem using a sequence of shallow unitary blocks that each evaluates to the identity. This limits the effective depth of the circuits used to calculate the gradient at the first iteration and allows us to efficiently train a variety of parametrized quantum circuits.

\section{A quick recap of the barren plateau problem}

Here we briefly recapitulate the original barren plateau problem and its generalization. Details of the derivation can be found in the Appendix. A parametrized quantum circuit can be described by a sequence of unitary operations
\begin{align}
\label{eq:ansatz}
U({\boldsymbol{\theta }}) = U_L (\theta_L)W_L \cdots U_1 (\theta_1)W_1 = \mathop {\prod}\limits_{l = L}^1 U_l(\theta _l)W_l ,
\end{align}
where $U_l(\theta_l) = \exp(-i \theta_l V_l)$, $\theta_l$ is a real-valued parameter, $V_l$ is an Hermitian operator, and $W_l$ is a fixed unitary. The objective function of a variational problem can be defined as the expectation
\begin{align}
E({\boldsymbol{\theta }}) = \langle 0|U({\boldsymbol{\theta }})^\dagger HU({\boldsymbol{\theta }})|0\rangle ,
\end{align}
where $H$ is an Hermitian operator representing the observable of interest. The partial derivatives take the form
\begin{align}
\label{eq:van_grad_1}
\partial_{\theta_k}E(\boldsymbol{\theta})  = i\left\langle {0\left| {U_ - ^\dagger \left[ {V_k,U_ + ^\dagger HU_ + } \right]U_ - } \right|0} \right\rangle,
\end{align}
where $U_{-} \equiv \prod^{1}_{l=k-1}U_{l}(\theta_{l})W_l$, and $U_{+} \equiv \prod^{k}_{l=L}U_{l}(\theta_{l})W_l$. If either $U_{-}$ or $U_{+}$ matches the Haar distribution~\cite{puchala2017symbolic} up to the second moment, e.g., $2$-designs~\cite{ambainis2007quantum}, the expected number of samples required to estimate Eq.~\eqref{eq:van_grad_1} is exponential in the system size.

In the Appendix we re-derive the result of Ref.~\cite{verdon2019} for the more general barren plateau problem. Concretely it is shown that the gradient in a direction $Z$ in the tangent space of the unitary group at $U(\alpha)$,
\begin{align}
\label{eq:van_grad_2}
\partial_\alpha E(U(\alpha))
= i \bra{0} Z^{\dagger}H U+U^{\dagger}H Z\ket{0},
\end{align}
where $U(0)=U$, and $\partial_\alpha U \rvert_{\alpha=0} = Z$, vanishes in expectation, i.e., 
\begin{align}
\mathbb{E}\left[ \partial_\alpha E(U(\alpha)) \right]
&= 0 .
\end{align}
Noting that the direction always takes the form $Z=-iUM$ for some fixed Hermitian matrix $M$, it is further shown that the variance
\begin{align}
\mathrm{Var}\left[ \partial_{\alpha} E(U(\alpha)) \right] &=2\frac{(\Herm^2)_{00}-(\Herm_{00})^2}{N^2-1}\left(\tr{H^2}-\frac{\tr{H}^2}{N} \right),
\end{align}
with $M_{00}:= \bra{0}M\ket{0}$ and $(M^2)_{00}:= \bra{0}M^2\ket{0}$, becomes exponentially small in the number of qubits.

This leads us to believe that the choice of parametrization for quantum circuits is a highly non-trivial task that can determine the success of the variational algorithm. In the next Section, we describe a method which resolves the barren plateau problem for practical purposes. In Section~\ref{s:experiments} we give numerical evidence on two different use cases.

\section{Initializing a circuit as a sequence of blocks of identity operators}
\label{sec:initialization_strategy}

Intuitively, the initialization strategy is as follows: we randomly select some of the initial parameter values and choose the remaining values in such a way that the result is a fixed unitary matrix, i.e., a deterministic outcome such as the identity. Additionally, we initially ensure that when taking the gradient with respect to any parameter, most of the circuit evaluates to the identity, which restricts its effective depth. This initialization strategy is optimized in order to obtain a non-zero gradient for most parameters in the first iteration. Obviously, this does not \textit{a priori} guarantee that the algorithm stays far from the barren plateau. However, numerical results indicate that this is indeed the case and that this initialization strategy allows the circuit to be trained efficiently. This gives an immediate advantage compared to previously known methods, which generally do not allow for training any parameter without an exponential cost incurred through the required accuracy.

Concretely, to ensure that $U_{-}$ and $U_{+}$ do not approach $2$-designs, we initialize the circuit via $M$ blocks where each block is of depth $L$. Depth $L$ is chosen to be sufficiently small so that the blocks are shallow and cannot approach 2-designs. In the following we will consider any fixed gate, i.e., $W_l$ in Eq.~\eqref{eq:ansatz}, as a parametrized one in order to simplify the presentation. For any $m=1,\dots,M$, the corresponding block has the form
\begin{align}
    U_m(\boldsymbol{\theta}_m)= \prod_{l=L}^{1} U_l(\theta^m_{l,1})\prod_{l=1}^{L} U_l(\theta^m_{l,2}).
\label{eqn:IDBlocks}
\end{align}

While the initial parameter values for the $U_l(\theta_{l,1})$ can be chosen at random, the parameter values for  $U_l(\theta_{l,2})$ are chosen such that $U_l(\theta_{l,2}) = U_l(\theta_{l,1})^{\dagger}$. Each block, and thus the whole circuit, evaluates to the identity, i.e.,
\begin{align}
    U(\boldsymbol{\theta}^{init}) = \prod_{m=M}^{1} U_m(\boldsymbol{\theta}_{m}) = \prod_{m=M}^{1} \left( \prod_{l=L}^{1} U_l(\theta^m_{l,1}) \prod_{l=1}^{L} U_l(\theta^m_{l,1})^{\dagger} \right) = \prod_{m=M}^{1} I_m = I .
\label{eq:IDCircuit}
\end{align}

It is important to choose each block $U_m(\boldsymbol{\theta}_{m})$ to be sufficiently deep to allow entanglement as training progresses. Yet each block should be sufficiently shallow so that $U_m(\boldsymbol{\theta}_{m})$, considered in isolation, does not approach a $2$-design to the extent that the gradients would become impractically small.

In Ref.~\cite{mcclean2018barren} it was shown that the sampled variance of the gradient of a two-local Pauli term decreases exponentially as a function of the number of qubits, which also immediately follows from the barren plateau result. Furthermore, the convergence towards this fixed lower value of the variance was numerically shown  to be a function of the circuit depth. This implies that for blocks $U_m(\boldsymbol{\theta}_{m})$ of constant depth, the whole circuit $U(\boldsymbol{\theta}^{init})$ is not in a barren plateau, allowing us to estimate the gradient efficiently for the initial learning iteration.

The intuition behind the \textit{identity block strategy} is the following: changing a single parameter in one block means that the other blocks still act like the identity. Therefore even if the whole circuit is deep enough to potentially be a $2$-design, changing any parameter will yield a shallow circuit. Notably this holds only for the first training parameter update.

We now analyze in more depth the behaviour of the initialization at the level of each block. An interesting property is that the gradients for gates located \textit{away} from the center of the block, e.g., in the beginning or at the end of a block, will have a larger magnitude. The reason is that the circuits required for the estimation are further from being $2$-designs since they are more shallow, as shown by the following calculation
\begin{align}
    \partial_{\theta^m_{k,1}} U_m(\boldsymbol{\theta}_m) &=
    \left( \prod_{l=L}^{k+1} U_l(\theta^m_{l,1}) \right) \left(\partial_{\theta^m_{k,1}} U_k(\theta^m_{k,1}) \right) \left(\prod_{l=k-1}^1 U_l(\theta^m_{l,1}) \right) \left(\prod_{l=1}^L U_l(\theta^m_{l,2})\right) \\
    &= \left( \prod_{l=L}^{k+1} U_l(\theta^m_{l,1}) \right) \left(\partial_{\theta^m_{k,1}} U_k(\theta^m_{k,1}) \right) U_k(\theta^m_{k,2}) \left(\prod_{l=k+1}^L U_l(\theta^m_{l,2})\right)
\end{align}
where we used that $U_l(\theta^m_{l,2})=U_l(\theta^m_{l,1})^\dagger$. For a small index $k$ we see that the circuit becomes shallow and hence the gradient is expected to be larger. A similar calculation can be done for the gradient with respect to the second set of parameters, i.e., $\partial_{\theta^m_{k,2}} U_m(\boldsymbol{\theta}_m)$. In this case, for index $k$ close to $L$ we see that the circuit becomes shallow.

To summarize, we expect that parameters at the boundaries of the blocks to have gradient larger than those at the center. Notice that the gradient can still be zero if $H$ commutes with the gate, which is a consequence of Eq.~\eqref{eq:van_grad_1} and Eq.~\eqref{eq:van_grad_2}. However, this is generally unlikely, and can be resolved by applying a small deterministic entangling circuit. More concretely, to avoid such cases we can add a shallow entangling layer $B$ to the circuit, i.e., $U_{M} \cdots U_1 B$, where $U_i$ are the blocks described above. This also resolves the barren plateau problem for the training of variational quantum eigensolvers, which we discuss in more detail in the next Section.

\section{Experimental results}
\label{s:experiments}

\subsection{Initializing a parametrized quantum circuit}

In this experiment we show the scaling of the variance of the gradient as a function of the number of qubits for both the random initialization and identity block strategy. For the random circuit, we use the same ansatz as in Ref.~\cite{mcclean2018barren}, Fig. $2$, and the same $ZZ$ observable. Their ansatz consists of layers of single qubit rotations $\exp(-i \theta_l V_l)$ about randomly selected axes $V_l \in \{X,Y,Z\}$, followed by nearest neighbor controlled-$Z$ gates. We used a total of $120$ layers of this kind. For the identity block initialization, we employed a single block as described by Eq.~\eqref{eq:IDCircuit} with $M=1$ and $L=60$. This setting also accounts for a total of $2LM=120$ layers. In both cases, initial values for the free parameters were drawn from the uniform distribution $\mathrm{unif}(0,2\pi)$.

In Fig.~\ref{fig:initial}~(a) we compare the variance of $(\partial_{\theta_{1,1,1}} E)$, i.e., the gradient with respect to the first element of the first part of the first block, as a function of the number of qubits $n$, when the circuit is applied to the input state $\sqrt{H}^{\otimes{n}}\ket{0}$. Each point in the Figure was computed from $200$ circuits. When using the random initialization, the variance decreased exponentially with the number of qubits, reproducing the plateau behaviour which was described in Ref.~\cite{mcclean2018barren}. In contrast to this, the variance of the circuit initialized as an identity block was invariant to the system size.  

In Fig.~\ref{fig:initial}~(b) we again compare the variance of $(\partial_{\theta_{1,1,1}} E)$ as a function of system size when circuits are applied to random MNIST images, downsampled and normalized such that they constitute valid quantum states. This type of encoding is known as amplitude encoding and represents a realistic scenario for computer vision tasks performed on a quantum computer. Each point in the Figure was computed from $200$ circuits. Similar to the previous experiment, the variance of the gradient vanished with the system size when using the random initialization. In contrast, for circuits using the identity block initialization, the variance did not vanish exponentially with the system size, showing that the plateau was avoided at initialization. 

\begin{figure}[ht]
\centering
\subfigure[]{\includegraphics[width=80mm]{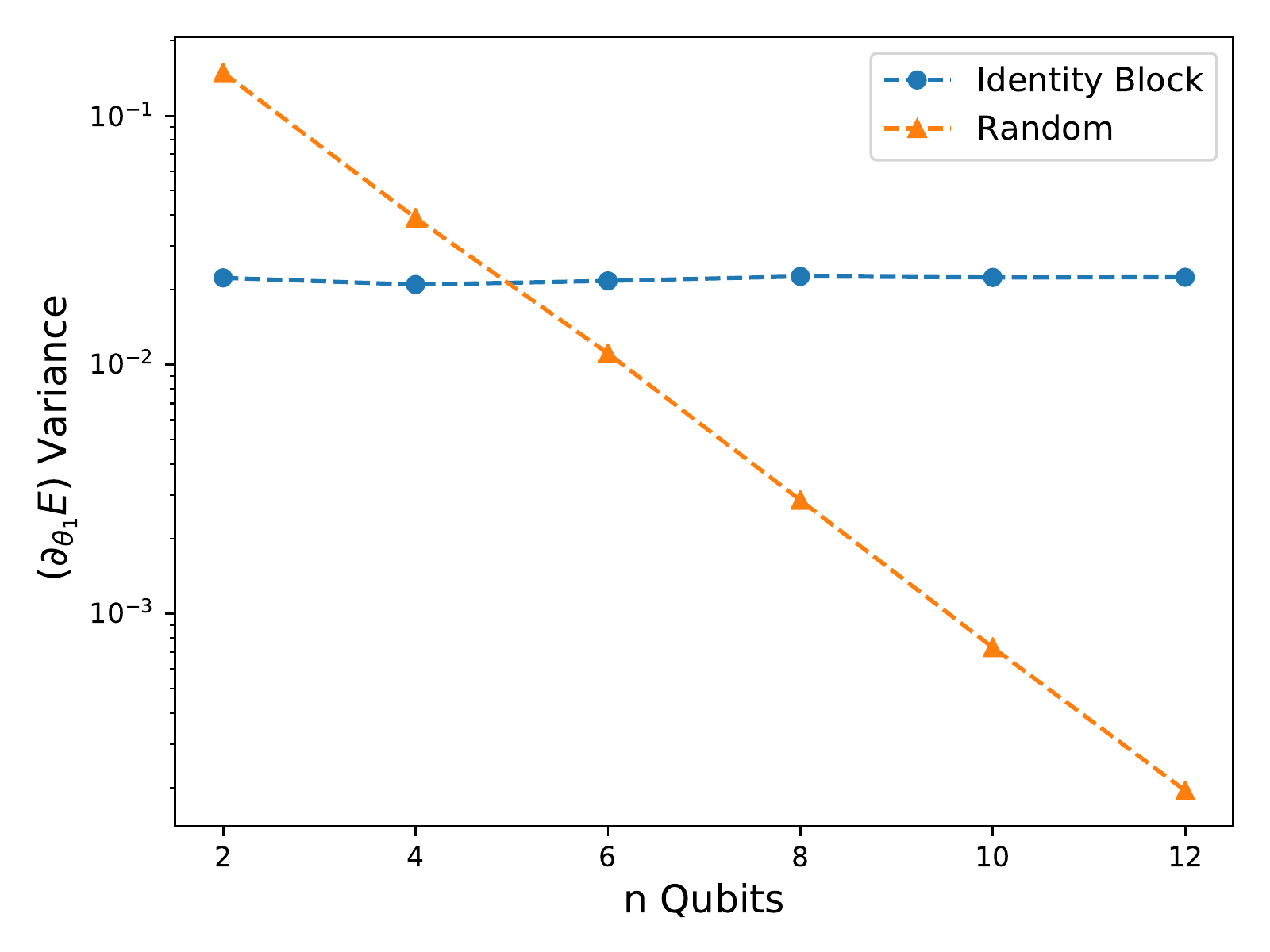}}
\subfigure[]{\includegraphics[width=80mm]{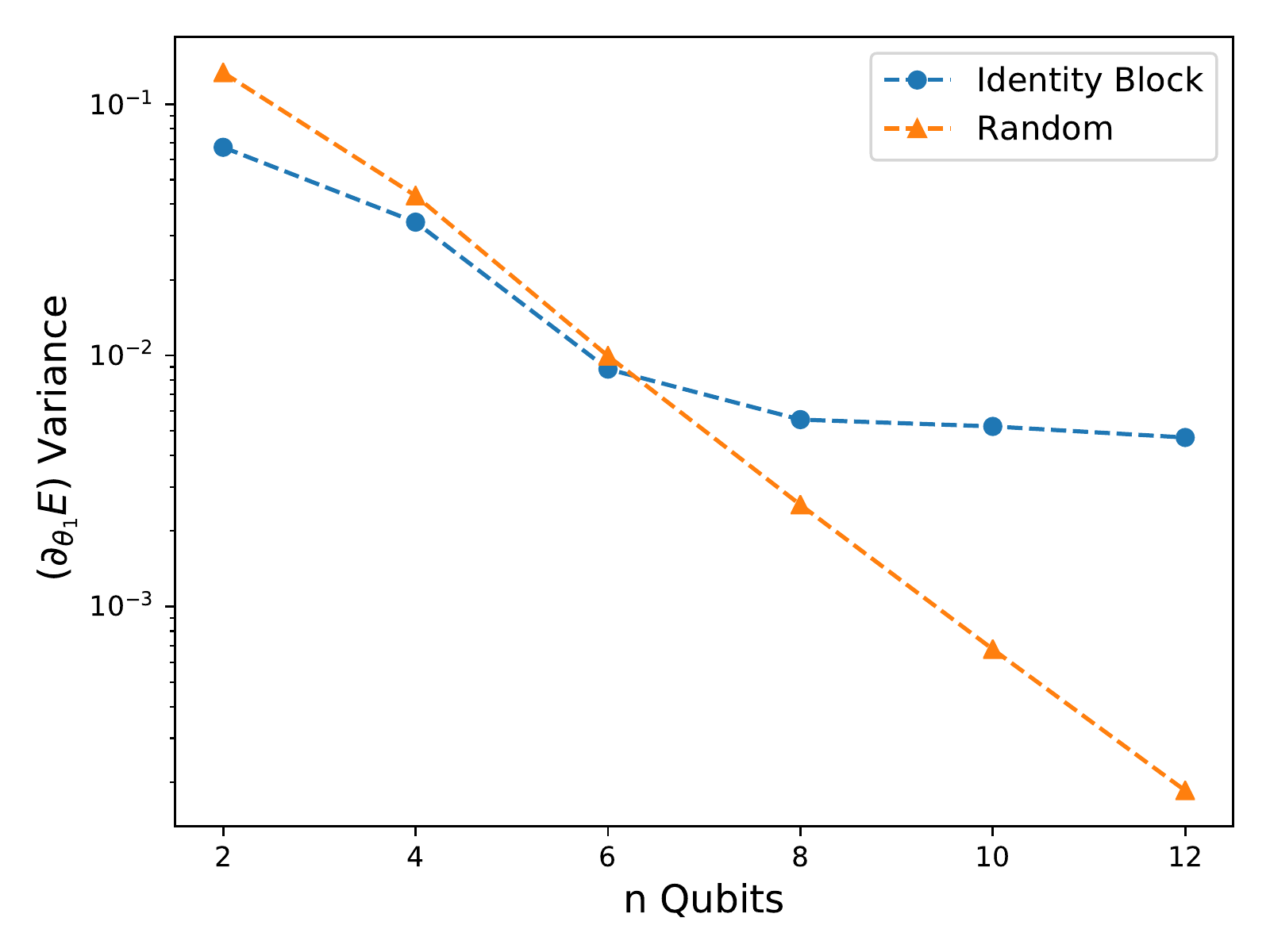}}
\caption{The variance of the gradient of the energy with respect to the first parameter as a function of number of qubits for (a) input state $\sqrt{H}^{\otimes{n}}\ket{0}$,  and (b) input state consisting of amplitude encoded MNIST images. Using the random initialization strategy resulted in the variance of the gradient becoming exponentially small as a function of the number of qubits. The identity block initialization prevented this for both input types (a) and (b).}
\label{fig:initial}
\end{figure}

\subsection{Training a quantum neural network classifier}

We have shown both analytically and empirically that a circuit with identity block initialization does not suffer from the barren plateau problem in the first training iteration. In this experiment we examine the variance of the gradient during training time to test whether a quantum neural network (QNN) classifier for MNIST images approaches the plateau. 

We used a $10$-qubit circuit with $M=2$ identity blocks, each having $L=33$ layers, for a total of $2LM=132$ layers. We selected $N=700$ MNIST images at random, resized them to $32 \times 32$, and finally reshaped them into vectors of dimension $2^{10}=1024$. We normalized each vector to unit length in order to be used as inputs to the circuit. Labels were set to $y_i=1$ for images of `even' digits, and $y_i=0$ for images of `odd' digits. For each MNIST example $\psi_i$, classification was performed by executing the circuit, measuring the observable $ZZ$, and finally ascribing a predicted probability to each class such that $P(even|\psi_i)=\tfrac{1}{2}(\bra{\psi_i}U(\theta)^{\dagger}ZZU(\theta)\ket{\psi_{i}}+ 1)$, and $P(odd|\psi_i)=1-P(even|\psi_i)$. The training was performed on $200$ different initial circuits, each constituting a different trial. We trained the circuit to minimize the binary cross-entropy loss
\begin{align}
\mathcal L = - \frac{1}{N} \sum_{i=1}^N y_i \log(P(even|\psi_i)) + (1-y_i) \log(1-P(even|\psi_i)) ,
\label{eq:cross_entropy}
\end{align}
Optimization was performed using the Adam optimizer with a learning rate of $0.001$~\cite{kingma2014adam} and a single randomly selected MNIST example used for each update. 

Figure~\ref{fig:training}~(a) shows the mean accuracy and standard deviation on a binarized MNIST dataset as a function of the training iterations for a circuit initialized using identity blocks compared with a strategy where all parameters are initially set to zero. While both strategies result in a circuit that initially evaluates to the identity, initializing all parameters to zero made the training much less efficient. Figure~\ref{fig:training}~(b) shows the variance of the partial derivative for parameters associated to the first three qubits, across different trials, and as a function of training iterations for circuits initialized with identity blocks. From the figures we observe that the model does not get stuck in a barren plateau (red dashed line), and that the variance decreases only as the model converges to a minimum of the objective function.

\begin{figure}[ht]
\centering     
\subfigure[]{\includegraphics[width=80mm]{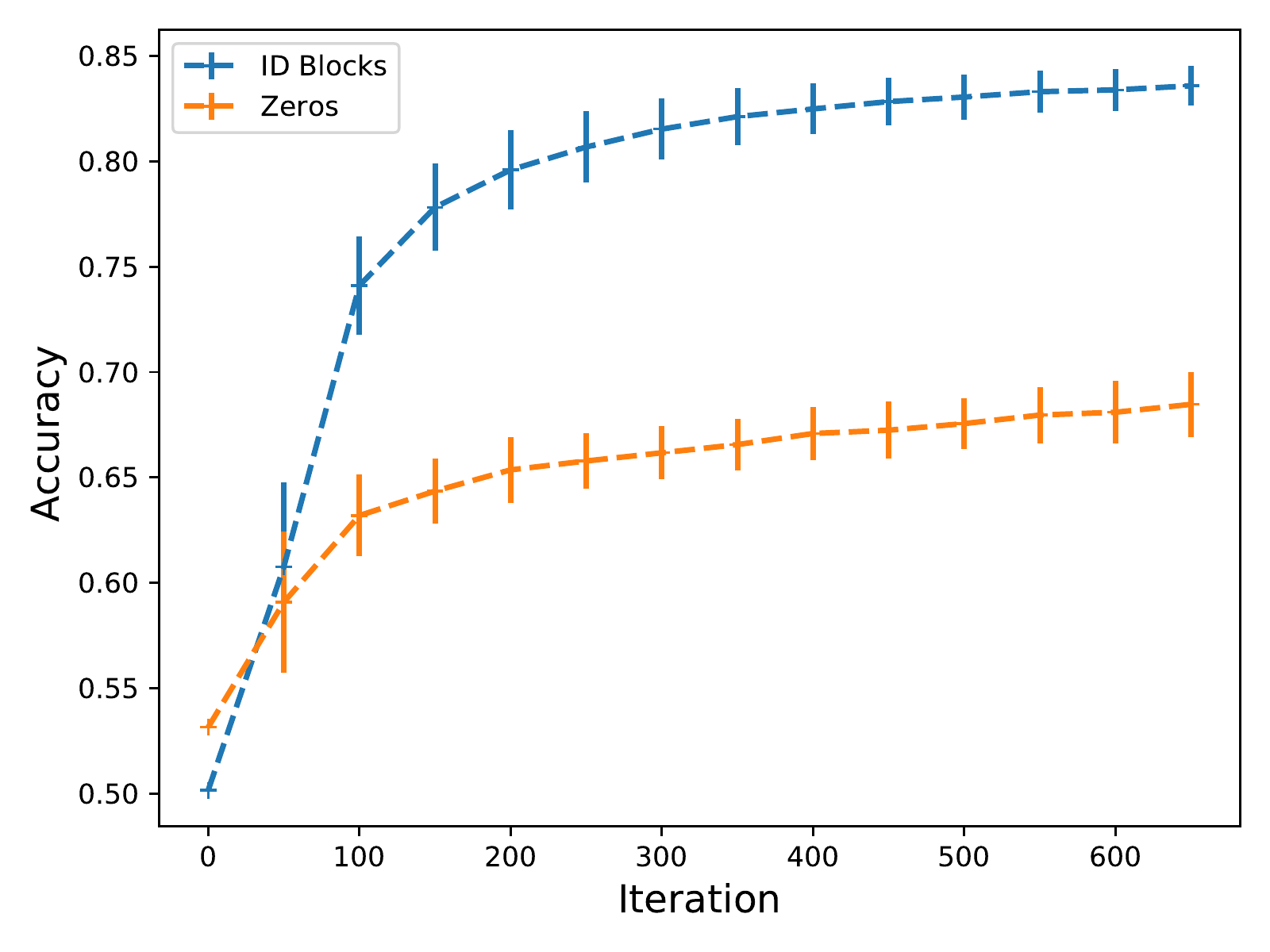}}
\subfigure[]{\includegraphics[width=80mm]{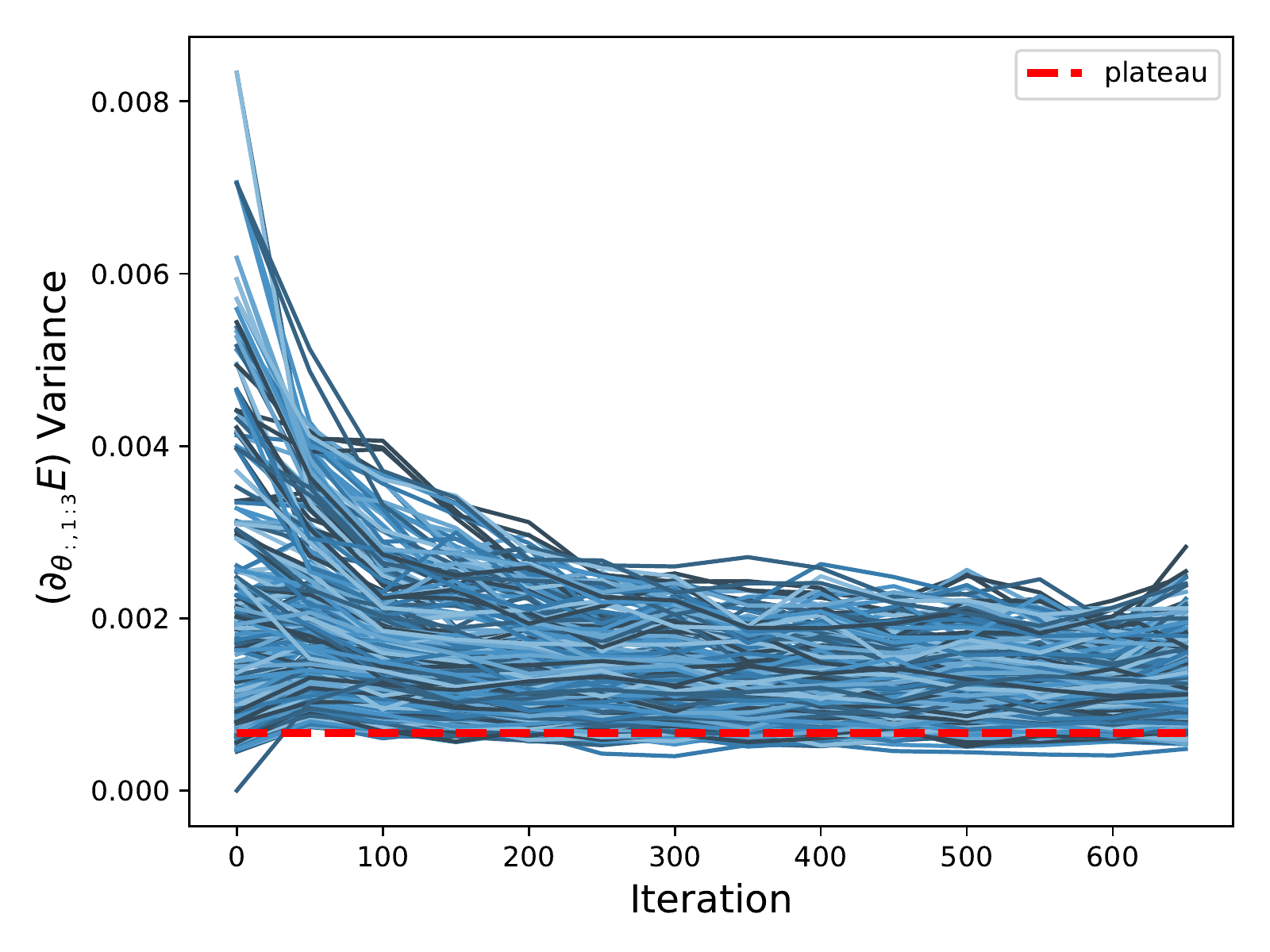}}
\caption{ (a) Training accuracy and standard deviation as a function of the number of training iterations for an MNIST classifying circuit initialized using two identity blocks compared with a circuit where all parameters are initially set to zero, and (b) variance across trials of the partial derivatives for parameters associated to the first three qubits in the circuit using the identity block initialization method.} The circuit initialized using identity blocks trained successfully in all trials and never encountered the barren plateau (red dashed line).
\label{fig:training}
\end{figure}

\subsection{Training a variational quantum eigensolver}
In this experiment we use the identity block strategy and train a variational quantum eigensolver (VQE) to find ground state energies. We chose the $7$-qubit Heisenberg model on a 1D lattice with periodic boundary conditions and in the presence of an external magnetic field. The corresponding Hamiltonian reads
\begin{align}
    H = J\sum\limits_{(i,j)\in \mathcal{E}}(X_iX_j+Y_iY_j+Z_iZ_j) + h\sum\limits_{i\in \mathcal{V}}Z_i,
\end{align}
where $\mathcal{G}=(\mathcal{V},\mathcal{E})$ is the undirected graph of the lattice with 7 nodes, $J$ expresses the strength of the spin-spin interactions, $h$ corresponds to the strength of the magnetic field in the $Z$-direction. In the experiments we set $J=h=1$. We chose this setting because for $J/h=1$ the ground state is highly entangled (see also Ref.~\cite{kandala2017hardware} for VQE simulations on the Heisenberg model).

Notice that the identity block initialization by itself cannot generate an entangled state at the very beginning. This could result in degraded performance for VQE. Hence, as the input state we chose $\ket{\psi}=B\ket{0}$, where $B$ consists of $7$ layers of random single-qubit rotations and controlled-$Z$ gates. These gates are never updated during training, and their purpose is to provide some level of entanglement even at the beginning. Training was performed using the Adam optimizer with a learning rate of $0.001$, until convergence. 

Figure~\ref{fig:VQE_init} shows the variance of the partial derivatives $(\partial_{\theta} E)$ across $200$ trials for (a) a circuit initialized using $M=2$ identity blocks and $L=33$ layers per block, and (b) a randomly initialized circuit with the same number of layers in total. In (b) we observe a barren plateau for all parameters, while in (a) we observe that most of the variances are well above the plateau. As expected, within each identity block the variance increases with distance from the center.  

Figure~\ref{fig:VQE_train}~(a) shows the mean and standard deviation of the expected energy across trials as a function of training iteration. A comparison is made between trials where circuits were initialized using identity blocks or where the initial parameters values were simply set to zero. Similar to the MNIST experiment, setting all parameters to zero resulted in poor training compared to the identity block strategy. The variance in the expectation when the zero parameter initialization strategy is used is explained by the random choice of the parameters in B for each trial. Panel~(b) shows the variance of the gradient across different trials as a function of training iterations for circuits initialized using identity blocks. The variance of the gradient approached zero during training as the model's energy (blue line) converged to the ground state energy (green dashed line). Thus, similarly to what observed in the MNIST classification experiment, the circuit did not get stuck in a plateau.

\begin{figure}[ht]
\centering
\subfigure[]{\includegraphics[width=80mm]{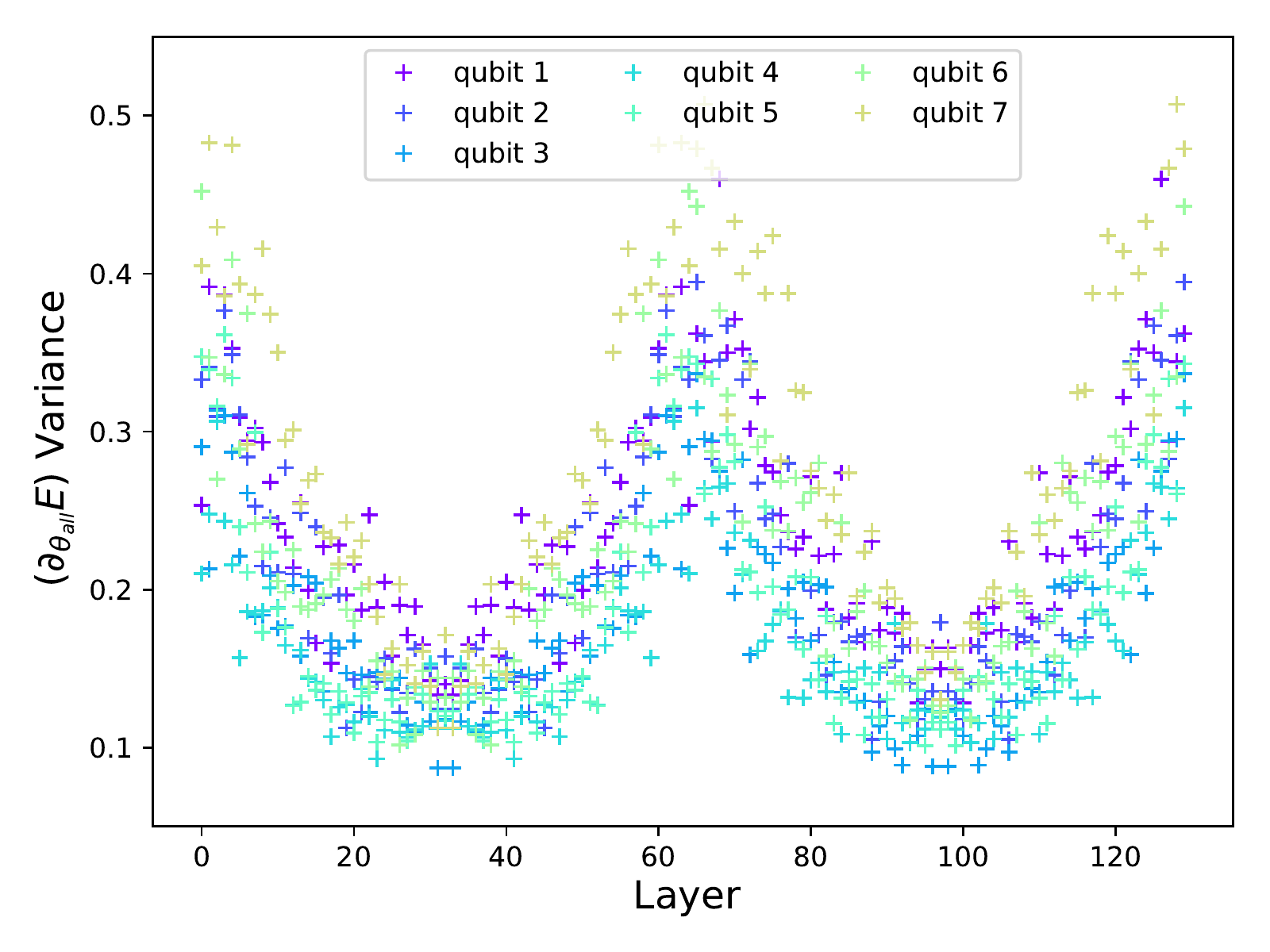}}
\subfigure[]{\includegraphics[width=80mm]{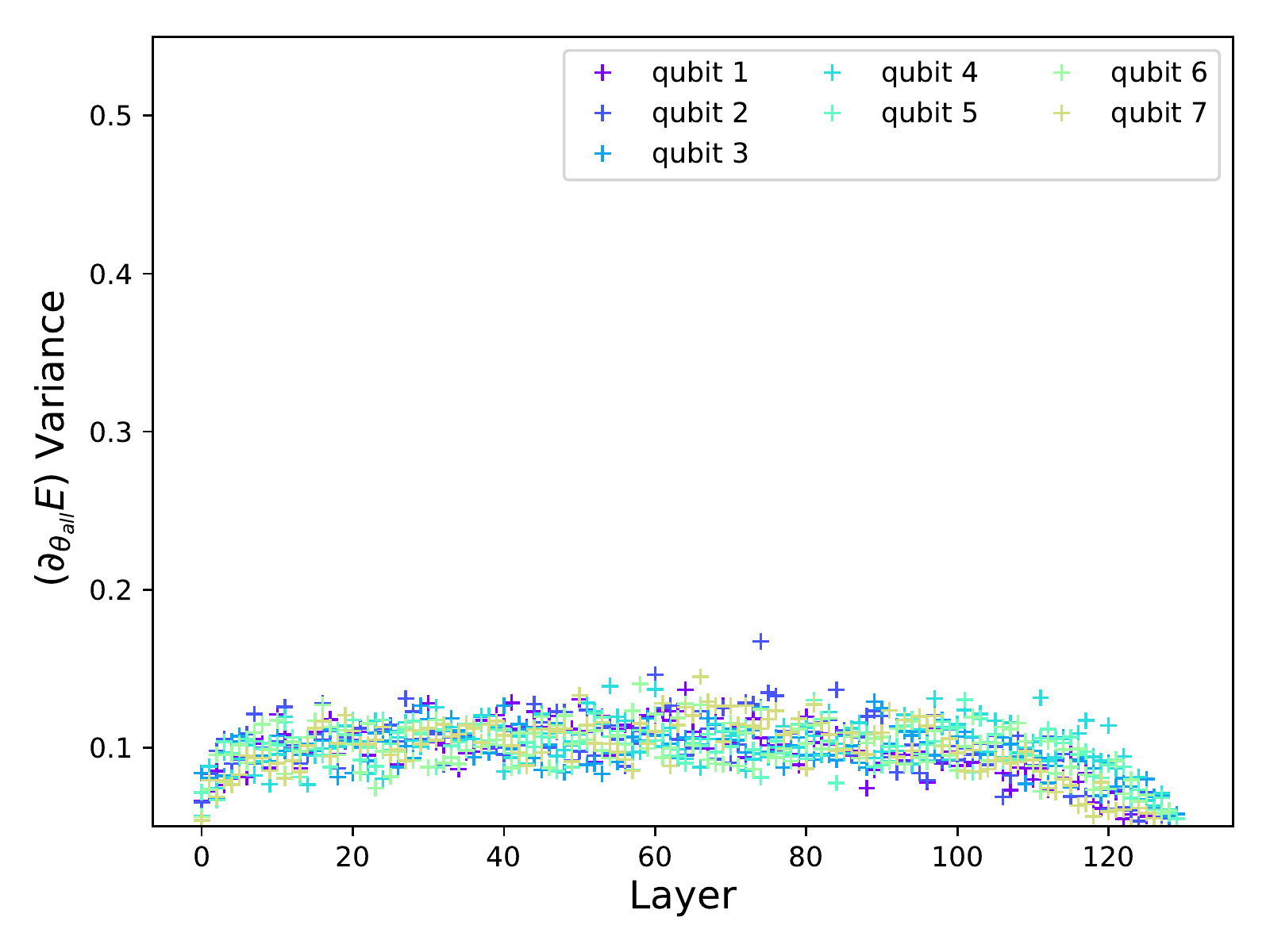}}
\caption{Variance of the gradient of a VQE circuit at the start of training across $200$ trials for (a) a circuit initialized using two identity blocks, and (b) a random initialization. In (a) the variance increases with the distance from the center of each identity block and the majority of variances exceed those in (b) where a barren plateau is expected.}
\label{fig:VQE_init}
\end{figure}

\begin{figure}[ht]
\centering     
\subfigure[]{\includegraphics[width=80mm]{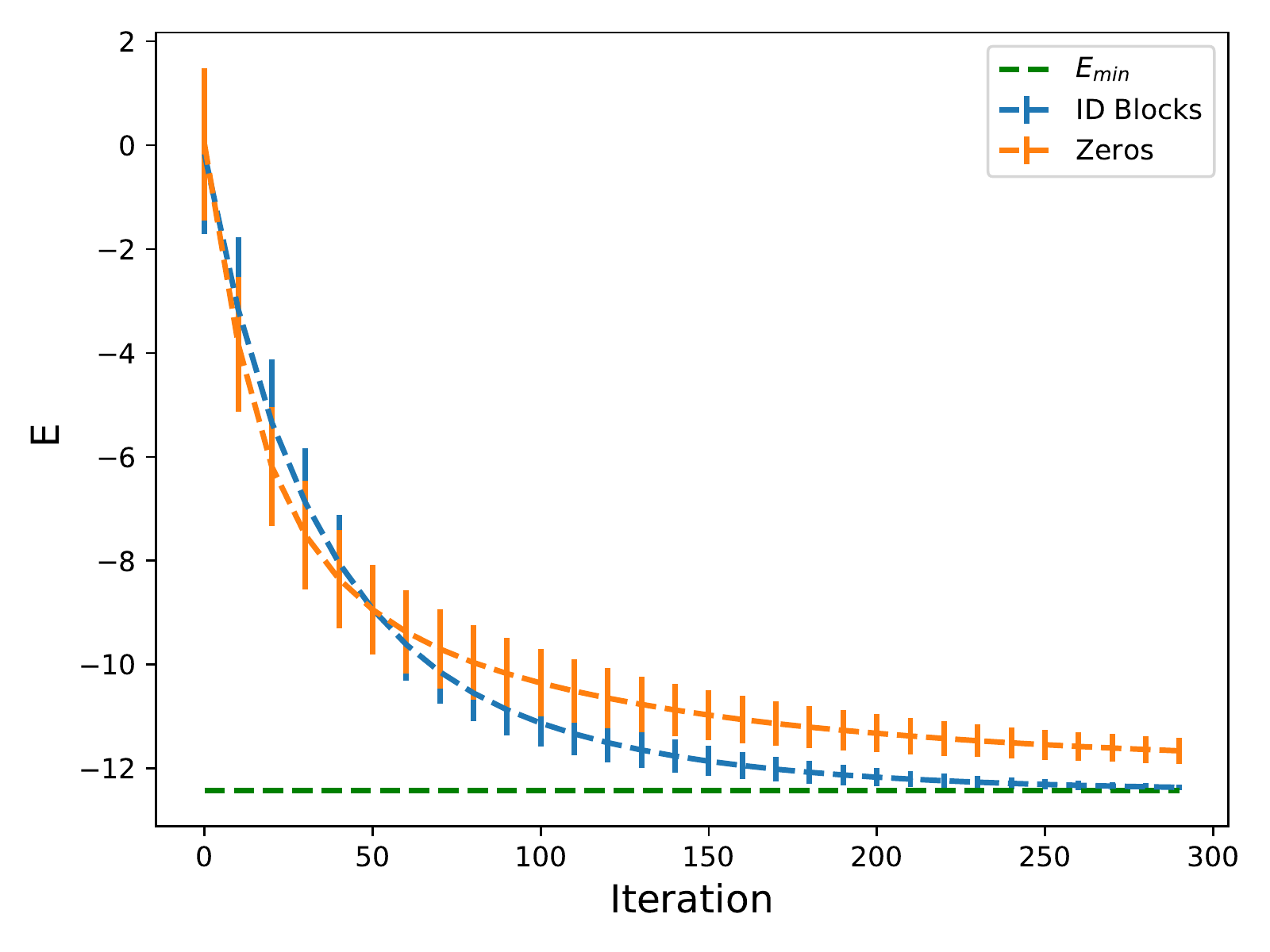}}
\subfigure[]{\includegraphics[width=80mm]{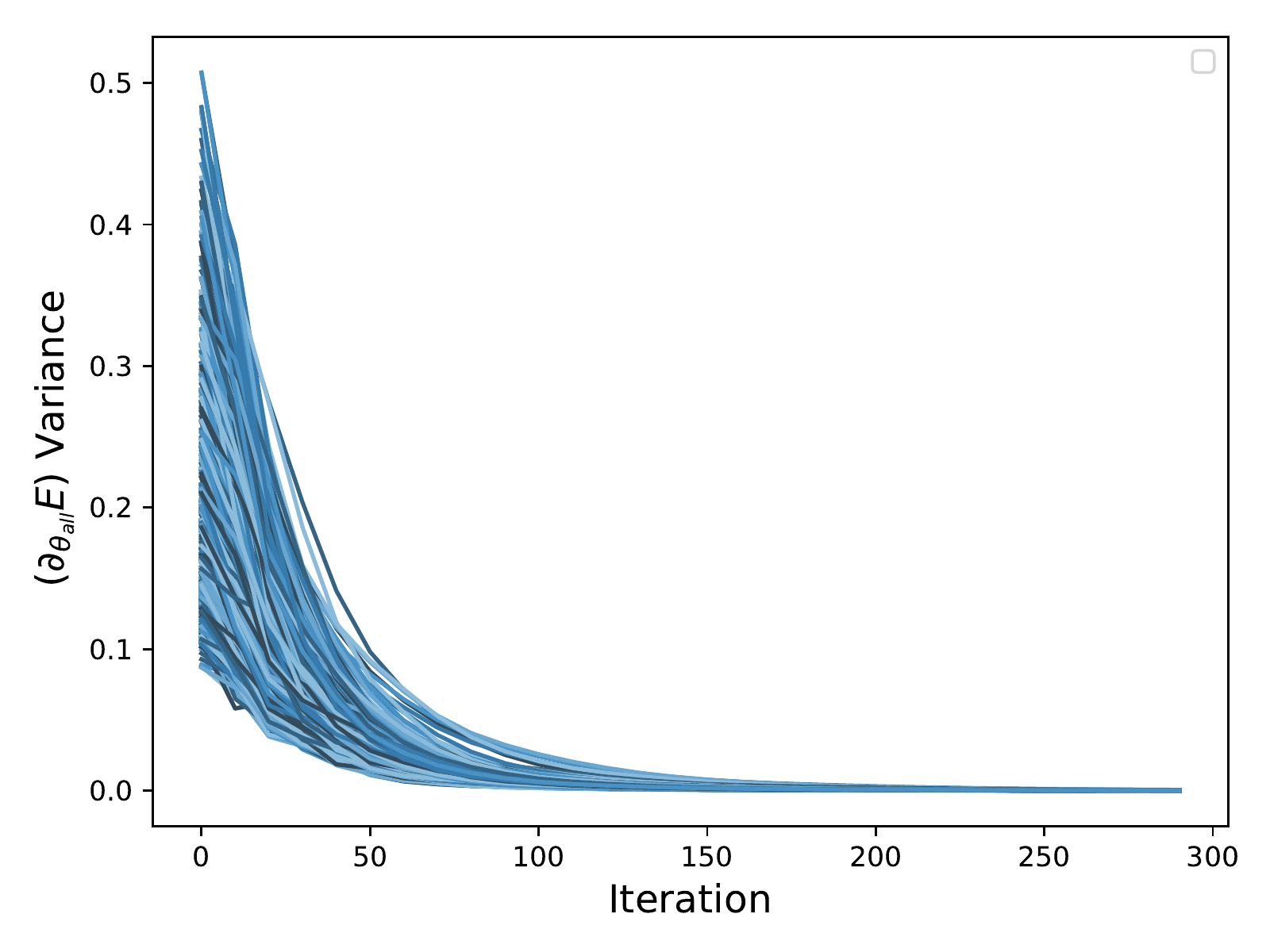}}
\caption{(a) Mean expected energy and standard deviation during training comparing a VQE initialized using identity blocks and a VQE initialized by setting all initial parameters to zero, and (b) variance of the gradient across trials as a function of training iteration for a VQE initialized using identity blocks. The circuit did not encounter a plateau during training since the variance of the gradient became small only as the model energy (blue line) converged to the ground state energy (green dashed line). In contrast to circuits initialized using identity blocks, circuits initialized by setting all parameters to zero failed to converge.}
\label{fig:VQE_train}
\end{figure}

\section{Conclusion}

In this technical note we motivated and demonstrated a practical initialization strategy that addresses the problem of barren plateaus in the energy landscape of parametrized quantum circuits. In the experiments we conducted, the \textit{identity block strategy} enabled us to perform well in two tasks: the variational quantum eigensolver (VQE), and the quantum neural network (QNN). 

More work is needed to assess the impact of input states and data encoding methods. In the case of VQEs, the strategy does not initially allow the circuit to generate an entangled state. We resolved this by adding a shallow entangling layer that is fixed throughout training. In the case of QNNs, the encoded input data can already be highly entangled, thereby reducing the depth of the circuit where the plateau problem occurs. From these examples we conclude that there is a problem-dependent trade-off to be analyzed. 

Finally, our approach is solely based on initialization of the parameter values. There are other potential strategies for avoiding barren plateaus such as layer-wise training, regularization, and imposing structural constraints on the ansatz. Understanding more about the relative merits of these and other approaches is a topic for future work. 

\section{Acknowledgements}
E.G. is supported by the UK Engineering and Physical Sciences Research Council (EPSRC) [EP/P510270/1]. L.W. is supported by the Royal Society. M.O. acknowledges support from Polish National Science Center scholarship 2018/28/T/ST6/00429. M.B. is supported by EPSRC and by Cambridge Quantum Computing Limited (CQC). We thank Raban Iten and Dominic Verdon for helpful technical discussions. We gratefully acknowledge the support of NVIDIA Corporation with the donation of the Titan Xp GPU used for this research.



\section{Appendix}

Here we provide a brief derivation of the vanishing gradient problem for the unitary group~\cite{verdon2019}.

\subsection{Vanishing gradient}
For Hermitian $H \in \mathbb{C}^{N \times N}$ and a normalized state $\ket{0} \in \mathbb{C}^N$, we consider the function $E(U)=\bra{0} U^{\dagger}H U\ket{0}$ for $U \in U(N)$, where $U(N)$ denotes the unitary group of dimension $N$. In the following, we calculate the derivative of $E(U)$ at a unitary $U$ in direction $Z$, where $Z$ lies in the tangent space at of the unitary group at the point $U$. To do so, we choose a path $U(\alpha)$ such that $U(0)=U$ and $\partial_\alpha U \rvert_{\alpha=0} = Z$. We have
\begin{align}
\partial_\alpha E(U(\alpha)) &=  \partial_\alpha \bra{0} U(\alpha)^\dagger H U(\alpha)\ket{0}\\
&=\bra{0} Z^{\dagger}H U+U^{\dagger}H Z\ket{0} .
\end{align}
We assume that $Z$ in the tangent space at $U$ has the form $Z=iU\Herm$ for some Hermitian operator $\Herm$, since it is easy to see that every $Z$ of this form is in the tangent space at $U$ and that every tangent vector at $U$ can be written in this form. 
Then, we have
\begin{align}
\partial_\alpha E(U(\alpha)) &= \bra{0} (-i\Herm U^{\dagger})H U+U^{\dagger}H (iU\Herm)\ket{0}\\
&=i\bra{0} U^{\dagger}H U\Herm-\Herm U^{\dagger}H U\ket{0}\\
&=i\bra{0} [U^{\dagger}H U,\Herm]\ket{0} \, .
\end{align}
Now, we would like to calculate the gradient over the whole unitary group. For this, we fix the Hermitian matrix $\Herm$ and find
\begin{align}
\mathbb{E}\left[ \partial_\alpha E(U(\alpha)) \right]
&=i \bra{0} [\mathbb{E}[U^{\dagger}H U],\Herm]\ket{0} \\
&=i \frac{\textnormal{Tr}(H)}{N} \bra{0} [I,\Herm]\ket{0} \\
&=0 \, ,
\end{align}
where we have used that for the Haar measure on the unitary group $\mu(U)$, $U \in U(N)$ it holds that ${\int d\mu(U) UOU^{\dagger} =  \frac{\tr{O}}{N} I}$, see \cite{puchala2017symbolic}.

Notably, if we initialize the matrix $U$ to be the identity for a fixed $H$, which could for example be achieved by just taking half the depth of the initial parametrized circuit $U_{1/2}$ and then appending the adjoint $U_{1/2}^{\dagger}$. The full initial circuit becomes 
\begin{align}
\label{eq:identity_init}
    U_{1/2}(\alpha) U_{1/2}^{\dagger}(\alpha)=I \, ,
\end{align}
then this is always the identity, i.e., constant. 
Plugging the identity into the expectation of the gradient, we then obtain 
\begin{align}
\label{eq:identity_init_2}
    \mathbb{E}\left[ \partial_\alpha E(U(\alpha)) \right]
    &=i \bra{0} [H,\Herm]\ket{0} \, ,
\end{align}
which is only zero whenever the Hamiltonian commutes with the observable, which is generally not the case. Note that this insight also holds for any other identity initialization such as the block initialization introduced in the body of the paper.

Note that trainable gates often take the form $\exp(-i \alpha_j V_j)$. If the $V_j$'s are chosen at random from tensor products of Pauli matrices $\{I,Z,X,Y\}^{\otimes n}$, then with high probability at least one of the derivatives is non-zero unless $H$ is the identity, see Eq.~\eqref{eq:identity_init_2}.
In sight of the initialization strategy, it is worth noting that initializing the circuit as $UU^\dag$ hence does not guarantee by itself that at least one derivative is non zero. 

\subsection{Vanishing variance}

We start with a simple identity.
\begin{lemma}
It holds that
\begin{align}
\left(U^{\dagger}AUBU^{\dagger}CU \right)_{ij} = \sum\limits_{k,l,m,n,p,q} A_{nm}B_{pq}C_{lk} U_{mp} U_{kj} U_{ni}^*  U_{lq}^* \, .
\end{align}
\end{lemma}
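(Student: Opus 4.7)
The plan is to prove the identity by a direct index expansion of the matrix product on the left-hand side, followed by a renaming of summation indices to match the form on the right-hand side. Since the statement is purely a tensor-index rewriting (no estimates, no analytic input, no properties of $U$ beyond $(U^\dagger)_{ab} = U^*_{ba}$), this is really a bookkeeping exercise, and the only potential source of error is mis-tracking which index sits on which slot.

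Concretely, I would first write
\begin{align}
\left( U^{\dagger} A U B U^{\dagger} C U \right)_{ij}
= \sum_{a_1,\dots,a_6} (U^{\dagger})_{i a_1} A_{a_1 a_2} U_{a_2 a_3} B_{a_3 a_4} (U^{\dagger})_{a_4 a_5} C_{a_5 a_6} U_{a_6 j} ,
\end{align}
using the standard rule that for a product $XYZ$ one has $(XYZ)_{ij}=\sum_{p,q} X_{ip}Y_{pq}Z_{qj}$, extended to seven factors. Next I would eliminate the two $U^{\dagger}$ entries by means of $(U^{\dagger})_{\alpha\beta}=\overline{U_{\beta\alpha}}=U^*_{\beta\alpha}$, which sends $(U^{\dagger})_{i a_1}\mapsto U^*_{a_1 i}$ and $(U^{\dagger})_{a_4 a_5}\mapsto U^*_{a_5 a_4}$.

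Finally I would rename the dummy indices according to the dictionary suggested by the right-hand side: $a_1\to n$, $a_2\to m$, $a_3\to p$, $a_4\to q$, $a_5\to l$, $a_6\to k$. Under this renaming the entries of $A,B,C$ become $A_{nm}, B_{pq}, C_{lk}$ and the $U$-factors become $U_{mp}$, $U_{kj}$, $U^*_{ni}$, $U^*_{lq}$, which is exactly the claimed expression. No commutation, reordering, or use of unitarity is required; the six summation indices on the right simply enumerate the six contracted slots in the matrix product.

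The only real pitfall will be sign conventions and ordering: it is easy to swap the two indices in $(U^\dagger)_{\alpha\beta}$ or to pair up the wrong adjacent slots when expanding the seven-factor product. I would therefore carry out the expansion in one pass, in the order the factors appear in $U^{\dagger}AUBU^{\dagger}CU$, and then perform the relabeling as a second independent step, to keep the two sources of possible error separated and easy to verify by inspection.
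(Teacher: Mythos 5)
Your proposal is correct and matches the paper's proof, which simply states that the identity ``follows from entry-wise evaluation''; your index expansion, the conversion $(U^{\dagger})_{\alpha\beta}=U^*_{\beta\alpha}$, and the relabeling $a_1\to n,\dots,a_6\to k$ reproduce the right-hand side exactly. You have merely written out the bookkeeping the paper leaves implicit.
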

\begin{proof}
The proof follows from entry-wise evaluation.
\end{proof}
We further need the following identity for the second moments in the proof.
\begin{lemma}[\cite{puchala2017symbolic}]
For $\int d\mu(U)$ being the integral over the unitary group with respect to the random Haar measure, it holds that
\begin{align} \int d \mu(U) U_{i_1j_1}U_{i_2j_2}U^*_{i_1'j_1'}U^*_{i_2'j_2'} &= \nonumber 
\frac{\delta_{i_1i'_1} \delta_{i_2i'_2} \delta_{j_1j'_1} \delta_{j_2j'_2} + \delta_{i_1i'_2} \delta_{i_2i'_1} \delta_{j_1j'_2} \delta_{j_2j'_1}}{N^2-1} ~ -\nonumber \\ 
& \quad~ \frac{\delta_{i_1i'_1} \delta_{i_2i'_2} \delta_{j_1j'_2} \delta_{j_2j'_1} + \delta_{i_1i'_2} \delta_{i_2i'_1} \delta_{j_1j'_1} \delta_{j_2j'_2}}{N(N^2-1)} \, .
\end{align}
\end{lemma}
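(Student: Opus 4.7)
The plan is to establish this as the second-moment Weingarten formula for $U(N)$, using invariance of the Haar measure together with Schur--Weyl duality, and then pinning down the two resulting Weingarten scalars by a pair of explicit contractions.

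First I would let $T$ denote the left-hand side as a function of the eight indices. Left translation $U \mapsto VU$ leaves the Haar measure invariant for any $V \in U(N)$, which forces $T$ to be an invariant tensor under the diagonal $V^{\otimes 2} \otimes \overline{V}^{\otimes 2}$ action on the $i$-type indices; by Schur--Weyl duality on $(\mathbb{C}^N)^{\otimes 2}$, the space of such invariants is spanned by the two $S_2$ permutation deltas $\delta^{(e)}_{ii'} := \delta_{i_1 i'_1}\delta_{i_2 i'_2}$ and $\delta^{(s)}_{ii'} := \delta_{i_1 i'_2}\delta_{i_2 i'_1}$. Right translation $U \mapsto UW$ gives the analogous conclusion for the $j$-type indices, so
\begin{align}
T = \sum_{\sigma, \tau \in S_2} c_{\sigma, \tau} \, \delta^{(\sigma)}_{ii'}\, \delta^{(\tau)}_{jj'} .
\end{align}
The standard Weingarten reduction (or the extra Haar symmetry $U \mapsto U^*$) then shows that $c_{\sigma, \tau}$ depends only on $\sigma\tau^{-1}$, leaving two unknown scalars $W_1 := c_{ee} = c_{ss}$ and $W_2 := c_{es} = c_{se}$.

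Second, I would fix $W_1$ and $W_2$ by two explicit contractions. Testing against $\delta^{(e)}_{ii'}\delta^{(e)}_{jj'}$, the unitarity identity $\sum_{i,j}|U_{ij}|^2 = N$ collapses the left-hand side to $N^2$, while the right-hand side evaluates by index-loop counting to $(N^4+N^2)\,W_1 + 2N^3\,W_2$. Testing against $\delta^{(s)}_{ii'}\delta^{(e)}_{jj'}$, the unitarity identity $(U^\dagger U)_{j_1 j_2}=\delta_{j_1 j_2}$ collapses the left-hand side to $N$, and the right-hand side to $2N^3\,W_1 + (N^4+N^2)\,W_2$. Solving the resulting $2\times 2$ linear system yields
\begin{align}
W_1 = \frac{1}{N^2-1}, \qquad W_2 = -\frac{1}{N(N^2-1)},
\end{align}
which is exactly the coefficient pattern in the stated formula.

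The main obstacle is the first step: justifying that $T$ lies in the four-dimensional span of the $\delta^{(\sigma)}_{ii'}\delta^{(\tau)}_{jj'}$ tensors really relies on the first fundamental theorem of invariant theory (equivalently Schur--Weyl duality on $(\mathbb{C}^N)^{\otimes 2}$), which I would invoke as a black box. Once that ansatz is granted, the rest is linear algebra and careful bookkeeping of Kronecker delta contractions. A small extra subtlety is the reduction from four coefficients to two; this can be justified by the standard Weingarten argument based on bi-invariance, or bypassed entirely by instead solving a full $4 \times 4$ system obtained from four independent contractions.
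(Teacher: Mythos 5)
Your proposal is correct, but note that the paper does not actually prove this lemma: it is imported verbatim from Pucha{\l}a and Miszczak \cite{puchala2017symbolic}, so there is no internal proof to compare against. What you give is a self-contained derivation by the standard Weingarten/Collins--\'Sniady route specialised to second moments: bi-invariance of the Haar measure plus the first fundamental theorem (Schur--Weyl on $(\mathbb{C}^N)^{\otimes 2}$) to justify the ansatz $T=\sum_{\sigma,\tau\in S_2}c_{\sigma\tau}\,\delta^{(\sigma)}_{ii'}\delta^{(\tau)}_{jj'}$, followed by contractions to fix the scalars. I checked your two test equations: contracting with $\delta^{(e)}_{ii'}\delta^{(e)}_{jj'}$ indeed gives $(N^4+N^2)W_1+2N^3W_2=N^2$, contracting with $\delta^{(s)}_{ii'}\delta^{(e)}_{jj'}$ gives $2N^3W_1+(N^4+N^2)W_2=N$, and the unique solution (for $N\ge 2$, where the Gram matrix $G\otimes G$ with $G=\bigl(\begin{smallmatrix}N^2 & N\\ N & N^2\end{smallmatrix}\bigr)$ is invertible) is $W_1=1/(N^2-1)$, $W_2=-1/(N(N^2-1))$, exactly the stated coefficients. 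Two small points: the reduction from four coefficients to two is most directly justified not by $U\mapsto U^*$ (which only relates $c_{\sigma\tau}$ to $c_{\sigma^{-1}\tau^{-1}}$, trivial in $S_2$) but by the integrand's invariance under simultaneously relabelling the two conjugated factors (and the two unconjugated ones), which sends $c_{\sigma\tau}\mapsto c_{\sigma s,\tau s}$; since you also offer the $4\times 4$-system fallback, this is cosmetic. And the FFT/Schur--Weyl input is genuinely needed and correctly flagged as the nontrivial black box. Relative to the paper, your route buys a self-contained proof at the cost of invoking invariant theory, whereas the paper simply outsources the identity to the cited reference.
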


First observe that we can explicitly evaluate the variance and obtain
\begin{align}
    \mathrm{Var}\left[ \partial_{\alpha} E(U(\alpha)) \right] &=- \tr{\rho_0 \left(U^{\dagger}HU\Herm - \Herm U^{\dagger}HU\right)}^2 \\
    &= \tr{\rho_0 U^{\dagger}HU \Herm \rho_0 \Herm U^{\dagger}HU} \nonumber - \tr{\rho_0 U^{\dagger}HU \Herm \rho_0  U^{\dagger}HU \Herm} \nonumber + \\
    & \quad~ \tr{\rho_0 \Herm U^{\dagger}HU \rho_0  U^{\dagger}H U \Herm} - \tr{\rho_0 \Herm U^{\dagger}HU \rho_0 \Herm U^{\dagger}HU }.
\end{align}
Note that here we used the fact that the square of the trace is the trace of the square since $\rho$ is a rank one matrix, i.e., a projector.

We can proceed now by evaluating the expectation of each term individually. As an example we calculate the first term, since the remaining terms can be evaluated in a similar fashion.
We need to evaluate the following term
\begin{align}
    \int d \mu(U) \tr{\rho_0 U^{\dagger}HU \Herm\rho_0 \Herm U^{\dagger}HU} =  \tr{\rho_0 \underbrace{\int d \mu(U) U^{\dagger}HU \Herm\rho_0 \Herm U^{\dagger}HU}_{=: \xi }} .
\end{align}
Note that we can look at the integrand entry-wise before evaluating the trace. With $A = C :=H$, $B:= \Herm \rho_0 \Herm$, we have
\begin{align}
    \xi_{ij} &= \int d\mu([U]_{ij}) \sum_{k,l,m,n,p,q} A_{nm}B_{pq}C_{lk} U_{mp}U_{kj}U^*_{ni}U^*_{lq}  \\
    &=  \sum_{k,l,m,n,p,q} A_{nm}B_{pq}C_{lk} \int d\mu([U]_{ij}) U_{mp}U_{kj}U^*_{ni}U^*_{lq}  \\
    &=  \sum_{k,l,m,n,p,q} A_{nm}B_{pq}C_{lk} \left( \frac{\delta_{mn} \delta_{kl} \delta_{pi} \delta_{jq} + \delta_{ml} \delta_{kn} \delta_{pq} \delta_{ji}}{N^2-1} \right) ~ - \nonumber \\ 
    & \quad~ \sum_{k,l,m,n,p,q} A_{nm}B_{pq}C_{lk} \left(\frac{\delta_{mn} \delta_{kl} \delta_{pq} \delta_{ji} + \delta_{ml} \delta_{kn} \delta_{pi} \delta_{jq}}{N(N^2-1)}  \right) \\
    &= \frac{\tr{A}\tr{C}}{N^2-1}B_{ij} + \frac{\tr{B}\tr{AC}}{N^2-1}\delta_{ij} ~ - \nonumber \\
    & \quad~ \frac{\tr{A}\tr{B}\tr{C}}{N(N^2-1)}\delta_{ij} - \frac{\tr{AC}}{N(N^2-1)}B_{ij} .
\label{eq:evaluate_Int}
\end{align}

Note that plugging in $A,B$ and $C$ in Eq.~\eqref{eq:evaluate_Int}, then yields
\begin{align}
    \frac{\tr{H}^2}{N^2-1}(\Herm \rho_0 \Herm)_{ij} + \frac{\tr{\rho_0\Herm^2}\tr{H^2}}{N^2-1}\delta_{ij} - \frac{\tr{H}^2\tr{\rho_0 \Herm^2}}{N(N^2-1)}\delta_{ij} - \frac{\tr{H^2}}{N(N^2-1)}(\Herm \rho_0 \Herm)_{ij} .
\end{align}

Doing similar calculations for the other terms (using \eqref{eq:evaluate_Int} for different $A,B$ and $C$) and canceling and summarizing terms, yields the variance 
\begin{align}
    \mathrm{Var}\left[ \partial_{\alpha} E(U(\alpha)) \right] &=2\frac{(\Herm^2)_{00}-(\Herm_{00})^2}{N^2-1}\left(\tr{H^2}-\frac{\tr{H}^2}{N} \right) ,
\end{align}
where $H_{kl}$ denotes the $(k,l)$-entry of a matrix $H$ and $M_{00}:= \bra{0}M\ket{0}$, $(M^2)_{00}=\bra{0}M^2\ket{0}$. 
This indicates that the variance indeed also decreases exponentially with the number of qubits.

\end{document}